\DeclareMathOperator{\tr}{Tr}
\newcommand{\ket}[1]{|#1\rangle}
\newcommand{\ketbra}[2]{|#1\rangle\!\langle#2|}
\newcommand{\id}{{\mathbbm 1}}
\newcommand{\mdag}{^{\dag}}
\newcommand{\hsp}[1]{\hspace{#1 em}}
\newcommand{\sqz}{\hsp{-0.1}}
\newcommand{\nketbra}[2]{\left\vert{#1}\right\rangle \sqz\sqz\sqz \left\langle{#2}\right\vert}
\newtheorem{prop}{Proposition}
\newtheorem{defin}{Definition}
\newtheorem{lem}{Lemma}
\DeclareMathOperator{\rank}{rank}
\DeclareMathOperator{\SE}{SE}
\begin{document} 
%
\title{Coherent Control of Quantum Systems as a Resource Theory}

\author{J. M. Matera}\thanks{These authors contributed equally to this work}
\affiliation{Institut f\"ur Theoretische Physik and IQST, Albert-Einstein-Allee 11, Universit\"at
	Ulm, D-89069 Ulm, Germany}
\affiliation{Departamento de F\'isica-IFLP, Universidad Nacional de La Plata, C.C. 67, La Plata 1900, Argentina}

\author{D. Egloff}\thanks{These authors contributed equally to this work}
\affiliation{Institut f\"ur Theoretische Physik and IQST, Albert-Einstein-Allee 11, Universit\"at
	Ulm, D-89069 Ulm, Germany}

\author{N. Killoran}
\affiliation{Institut f\"ur Theoretische Physik and IQST, Albert-Einstein-Allee 11, Universit\"at
	Ulm, D-89069 Ulm, Germany}

\author{M. B. Plenio}
\affiliation{Institut f\"ur Theoretische Physik and IQST, Albert-Einstein-Allee 11, Universit\"at
	Ulm, D-89069 Ulm, Germany}

\begin{abstract}
	Control at the interface between the classical and the quantum world is fundamental in quantum physics. In particular, how classical control is enhanced by coherence effects is an important question both from a theoretical as well as from a technological point of view. In this work, we establish a resource theory describing this setting and explore relations to the theory of coherence, entanglement and information processing.
Specifically, for the coherent control of quantum systems the relevant 	resources of entanglement and coherence are found to be equivalent and closely  related to a measure of discord. The results are then applied to the DQC1 protocol and the precision of the final measurement is expressed in terms of the available resources.
\end{abstract}
\pacs{03.67.-a}
\keywords{Resource Theories, Coherence, Entanglement, Discord, Quantumness, Quantum Computation}
\maketitle

{\bf \em Introduction ---\phantomsection{}\addcontentsline{toc}{section}{Introduction}}  Coherent superposition is a defining characteristic of the quantum world.
Coherence indicates the fundamental misalignment, or noncommutativity,
between quantum states and the interactions or observables which we may
use to probe them. Due to its intimate connection with quantum superposition,
coherence is also important in a large number of quantum information protocols.
In fact, coherence can be seen as a type of \emph{resource}, allowing one
to perform tasks which would be more difficult or not possible otherwise.
Indeed, coherence has recently been developed into a formal quantum resource
theory \cite{BaumgratzCP14,WinterY2015,Yuan2015,DuBQ2015_maxco,xi2014quantum,KilloranSP2015,Streltsov2015,YaoXG+2015,chitambar2015assisted,chitambar2015relating,streltsov2015hierarchies,hu2015,hu2015coherence,ma2015converting}
\footnote{Another possible approach to coherence theory \cite{marvian2014,aaberg2014}
is based on the theory of reference frames \cite{bartlett,GourS08}, which proved
useful in quantum thermodynamics\cite{lostaglio2014}.}, similar to that for
entanglement \cite{plenio2007introduction,HorodeckiHHH2009}.

In the macroscopic classical world, where states and observables commute, superposition
effects are suppressed and physical systems can be described without coherence, using
classical probability distributions. Yet some special systems, often found at mesoscopic
scales, can exist in the murky borderlands between the classical and quantum worlds. In
fact, systems which bridge between these worlds are very important in
modern experiments. Operationally, it is common to employ intermediary physical systems,
such as lasers, magnetic fields, or photodiodes, to interface with a separate ``target''
quantum system. By coupling to the target system, these mediator systems can function
as state preparation, control, and measurement devices.

In order to interact meaningfully with the controlled system, the mediator systems must
themselves be able to exert a nonclassical effect on their targets. At the same
time, they must also interface with the classical world in order to communicate human- or machine-readable 
instructions and measurement outcomes. Through this, they are inevitably exposed to
classical noise and decoherence effects which makes the creation and the conservation of coherence
a costly task. Recognising that coherence is a potential resource, we might ask: 
what value might be gained if we were to pay these costs? What potential quantum advantages 
do coherent resources provide in this standard operational paradigm?

In this work, we address these questions by formalizing a resource theory for the tasks
of preparing, controlling, and measuring quantum systems, and explore the differences
between having incoherent versus coherent resources at our disposal. Within this framework,
coherence and entanglement can be freely interconverted and thus represent the same
underlying resource, which we call the \emph{recoverable coherence}. We introduce a
quantifier for this resource, and connect it to measures of quantum discord\cite{OZ.01,vedral_discord,Modi2012,Streltsov2015a}. 
Finally, we illustrate these ideas through an application to the family of quantum algorithms known as ``Deterministic Quantum Computation with one qubit'' (DQC1) proposed in \cite{Knill}. In the last decade this family of algorithms   has instigated a lively debate as to  what is the 
quantum resource behind the speed up obtained with quantum algorithms, since DQC1 can be implemented even with a very small amount of entanglement\cite{ParPle.02,Datta.05,DSC.08,lanyon,Ali.2014,ma2015converting}.   
We show how  the accuracy of the outcome in DQC1 can be quantified in terms of the recoverable coherence and 
how this connects with entanglement and discord. We discuss connections with related works
in the section defining the resource theory and in the \hyperref[appendix]{Appendix}.
\begin{figure}[t]	
	\centering
	\scalebox{.5}{\includegraphics{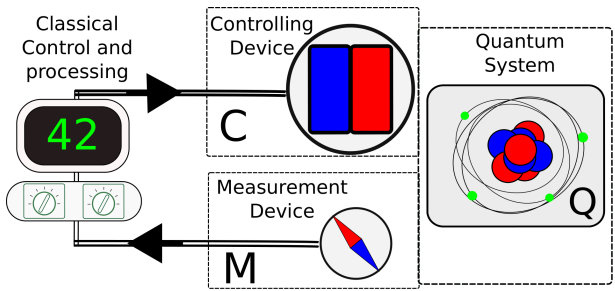}}
	\caption{ {\bf  Devices between the quantum and the classical.}\label{fig:classical-quantum-control} As we move more from the left to the right hand side, the scale of the system reduces and we have access to a stronger coherent control.
	}
\end{figure}
\\

{\bf \em The framework ---\phantomsection{}\addcontentsline{toc}{section}{Framework}} To motivate the following framework, we start by considering 
a generic experimental setup for controlling a quantum system (see Fig. 
\ref{fig:classical-quantum-control}). Humans can only interact mechanically 
with macroscopic objects, therefore one part of any experiment needs to be 
macroscopic, be it only the keyboard of a computer. This part is fully
described by classical physics.
To model this in quantum mechanics, following~\cite{BaumgratzCP14} we say
that the state of a system is incoherent if it is  diagonal in a fixed basis.
\begin{defin}
    Given a system $A$ and a fixed orthonormal basis ${\cal Z}=\{\ket{c}\}_{c=0}^{D-1}$,
    we call a state $\rho$  \emph{incoherent} (with respect to $\cal Z$) if $\rho=\sum_c p_c |c\rangle\langle c|$
    for some $p_c\geq 0$, where $\sum_c p_c=1$.
\end{defin}

Secondly, there are some experimental devices which allow us to operate on
the full quantum system. These intermediary devices are usually in the mesoscopic 
domain, since we want to manipulate their operating parameters
in a deterministic way and, at the same time, use them to manipulate microscopic quantum system. Because of their size it is an operationally hard task
to bring these devices controllably into a coherent superposition that
remains stable against decoherence. 
In order to cast this situation into a resource theory perspective, it is
advantageous to assume that only incoherent operations are available and all 
required coherence is supplied by a third party. For this 
reason, we might think of coherence as a resource for the manipulation of the controlling devices.
Following~\cite{BaumgratzCP14}, we thus define\footnote{See the \hyperref[incc]{Appendix} for a brief discussion on this choice.}:
\begin{defin}\label{def:inc}
	We call a quantum operation incoherent if each of its Kraus operators $K_{\alpha}$ is incoherent.
    That is, for any $\sigma$ incoherent, $K_{\alpha} \sigma {K_{\alpha}}\mdag$ is incoherent.
\end{defin}
We want to use these intermediary devices to \emph{control} a quantum system, and in
the best case, we can have perfect quantum control. We must keep in mind that the
controlling devices decohere quickly due to their size. Therefore it might be hard to
encode information in a control basis which is not the incoherent one and still
have a stable quantum control.
The best stable control we can hope for is therefore given by the unitary
\begin{align}\label{eq:control}
U_{control}=\sum_{c} \ketbra{c}{c} \otimes {U_c}.
\end{align}
We also assume that we can add ancillae to the quantum system and that we can
trace out parts of the system, i.e. we can \emph{prepare} and \emph{discard}
parts of it.

Ultimately we might wish to \emph{measure} the quantum system. We therefore
include a measurement device, which couples to the quantum system, performs
any measurement of the quantum system, and sends the classical measurement
result to a computer. This is described by
\begin{equation}\label{eq:measure}
    \ketbra{0}{0}\otimes\rho \mapsto \sum \limits_c \ketbra{c}{c}\otimes K_c\rho K_c\mdag,
\end{equation}
where the left-hand side of the tensor product denotes a register in the
computer's memory and the right hand side is the state of the quantum system.

We note that incoherent operations include anything one can do with a computer.
Since in this paper we focus on the control of a quantum system,
we only need to consider two systems: the controlling system A, on which we can
do any incoherent operations for free, and the quantum system B, for which we
have full quantum control, including measurements. In this way, the complete
family of allowed operations in our framework is defined as follows:
\begin{defin}~\label{def:framework}
	Consider a bipartite system $AB$. The class of \emph{Global Operations Incoherent
    on $A$}, with respect to the local orthonormal basis  ${\cal Z}=\{|c\rangle\}$ on
    $A$ (abbreviated as $GOIA_{\cal Z}$), is the family of quantum channels consisting
    of (finite) combinations of:
	\begin{enumerate}
		\item Incoherent operations on $A$ (Def. \ref{def:inc})
		\item Controlled operations in the incoherent basis from $A$ to $B$ (Eq.~\ref{eq:control})
		\item Adding or removing (tracing out) ancillae on $B$
		\item Measurement and postselection on $B$ (Eq.~\ref{eq:measure}).
	\end{enumerate}
\end{defin}
\noindent Notice that if we extend this set of operations by allowing general unitary
operations on $A$, we recover the full set of quantum operations on $AB$.\\

{\bf \em The Resource Theory ---\phantomsection{}\addcontentsline{toc}{section}{Resource Theory}}
Having defined the free operations, we need to address what are the free and the resource
states in the framework, and investigate how resources can be distilled if someone provides
a source of non-free states.

{\em Free states ---\phantomsection{}\addcontentsline{toc}{subsection}{Free states}} With $\mathcal{Z}=\{\ket{c}\}$ as the incoherent basis of A, we can prepare any state
of the form $\rho_0=\sum_c p_c\ketbra{c}{c}_A\otimes\ketbra{0}{0}_B$
using only incoherent operations on $A$. Performing controlled operations (Eq.~\ref{eq:control}) on $\rho_0$ (with the aid of ancillary states), we can prepare any state in the set
of \emph{$\mathcal{Z}$-classical-quantum states} \cite{chitambar2015assisted},
\begin{equation}
    CQ_{\cal Z}:=\{\rho~|~\rho=\sum_c p_c|c\rangle\langle c | \otimes \rho_c,~\ket{c}\in\mathcal{Z}\},
\end{equation}
where the $\rho_c$ are arbitrary quantum states. Conversely, any operations in the
$GOIA_{\cal Z}$ framework conserve this set. The largest set of 
operations that preserves the set of classical-quantum states $CQ_{\cal Z}$ was 
defined in \cite{ma2015converting}. While all of our operations are inside that 
set, the converse remains an open question.
Using the physical picture we have introduced, we can link together other seemingly
disparate recent works in the field~\cite{Streltsov2015,YaoXG+2015,chitambar2015assisted,chitambar2015relating,streltsov2015hierarchies,hu2015,hu2015coherence,ma2015converting} (see the \hyperref[appendix]{Appendix} for a brief overview
and also see the related independent work~\cite{Yadin2015},which discusses a different subset of the $CQ_{\cal Z}$-preserving operations~\cite{ma2015converting}).

Most notably, in~\cite{chitambar2015assisted} the subset of the bipartite 
operations on $AB$, that can be performed \emph{locally}, only with the aid of 
classical communication (the Local Quantum-Incoherent operations and Classical
Communication, $LQICC_{\cal Z}$) was introduced. These operations are a strict subset of $GOIA_{\cal Z}$: we 
get them by restricting the control to be performed by local operations on $B$, conditional to measurementes outcomes on $A$. The connections will prove useful to unravel the resource 
theory defined by $GOIA_{\cal Z}$, which is done in the following sections.

{\em Coherence as a resource ---\phantomsection{}\addcontentsline{toc}{subsection}{The resources}}
Now we determine the set of resource states in our framework, i.e. the set of
states which makes $GOIA_\mathcal{Z}$ operations universal. To this end
we first note that from a supply of maximally coherent states $|+\rangle$ on $A$ we
are able to implement any local operation on $A$ \cite{BaumgratzCP14}. Secondly, the
supply of $|+\rangle$ on $A$ also allows for the generation of
entanglement between $A$ and $B$ by application of Eq.~\ref{eq:control}. Thirdly, the
provision of arbitrary local operations on $A$ and $B$ and arbitrary amounts of entanglement
between $A$ and $B$ allows for the generation of arbitrary joint operations between
$A$ and $B$ \cite{EisertJP+2000} (see also \cite{streltsov2015hierarchies}).
In particular, by applying a CNOT (included in the $GOIA_{\cal Z}$-, but not in the $LQICC_{\cal Z}$-operations) one can create a pure maximally entangled state (a singlet) from a maximally coherent state~\cite{Streltsov2015} and one can steer an incoherent state to a maximally coherent one under $LQICC_{\cal Z}$ operations by using up a singlet state~\cite{hu2015}. Therefore the pure resource states can be produced from one another with $GOIA_{\cal Z}$ operations.

We can now ask how many resource states one can distill from $n$ copies of a given state.
\begin{defin}
	Let $r^{\epsilon}(\rho,n)\cdot n$ be the maximal number of fully coherent qubit states
    (on subsystem $A$) that can be prepared from $n$ copies of the state $\rho_{AB}$
    with fidelity at least $1-\epsilon$, by applying maps $\Lambda \in  GOIA_{\cal Z}$:
	$$
	r^{\epsilon}(\rho,n)
	:= \sup\limits_{\Lambda \in GOIA_{\cal Z}}\!\left\{R\;| \;F\left(\Lambda(\rho^{\otimes n}),
    \ketbra{+}{+}^{\otimes n R}\right)\geq 1- \epsilon\right\}.
	$$
	The \emph{recoverable coherence} (with respect to the basis $\cal Z$) is the infinite-copy
    and infinitesimal error limit of the above maximal ratio:
	\begin{equation}
	C^{REC}_{{\cal Z}}(\rho) = \lim_{\epsilon\rightarrow 0}\lim_{n\rightarrow \infty}  r^{\epsilon}(\rho,n).
	\label{eq:defrrec}
	\end{equation}
\end{defin}

As entangled and coherent resource states can be interconverted, the analogous notion of
recoverable entanglement coincides with $C_{\cal Z}^{REC}$:
\begin{equation}
    E_{{\cal Z}}^{REC}(\rho)\equiv C_{{\cal Z}}^{REC}(\rho).
\end{equation}
Notice that these quantities are not equivalent to the distillable entanglement\cite{plenio2007introduction}. Moreover, they are not entanglement monotones \cite{vidal200} since $GOIA_{\cal Z}$ allows to convert product states (not incoherent on $A$) into entangled states.
However, for pure states, distillable entanglement is a lower bound to the distillable
entanglement under $GOIA_{\cal Z}$
\footnote{For pure states, the protocol described in~\cite{bennett_pure_entanglement_distillation} for entanglement
distillation can be performed, since it requires full quantum control just on one side.
This shows that the entanglement entropy provides a lower bound.}. 
We give general lower bounds to the distillable coherence in the \hyperref[lower_bounds]{Appendix}.
As was noticed in \cite{chitambar2015assisted}, for any state outside of $CQ_{\cal Z}$,
there is a protocol in $LQICC_{\cal Z}\subset GOIA_{\cal Z}$ which allows to recover some amount of coherence. Hence, there is no bound coherence or entanglement in $GOIA_{\cal Z}$.

\noindent{\em A monotone for $C_{\cal Z}^{REC}$ --- \phantomsection{}\addcontentsline{toc}{subsection}{A monotone}}
A next natural step in the resource theory is to introduce a monotone which
quantifies the distance to the free states. A particularly suitable measure
is provided by the relative entropy:
\begin{equation}
\label{def:distcqia}
\Delta_{\cal Z}(\rho)=\min_{\sigma \in CQ_{\cal Z} } S(\rho||\sigma)=S(\rho')-S(\rho),
\end{equation}
\noindent where $S(\rho||\sigma)={\rm Tr} [\rho(\log_2(\rho)-\log_2(\sigma))]$,
$S(\rho)=-{\rm Tr}[\rho \log_2\rho]$ is the von Neumann entropy and
$\rho'=\sum_c (|c\rangle\langle c|\otimes \id)\rho (|c\rangle\langle c|\otimes \id)$
is the state obtained from $\rho$ by completely decohering with respect to the basis
$\mathcal{Z}$ (see Lemma 1 in the \hyperref[proofs]{Appendix}).

The functional $\Delta_\mathcal{Z}$ is additive, convex and monotonic under
$GOIA_{\cal Z}$ on average. Proofs of these properties are presented in the \hyperref[proofs]{Appendix}
in Lemma 2, 3 and Proposition 1.
Most importantly, it upper bounds $C_{\cal Z}^{REC}$ via (see the \hyperref[upper bound]{Appendix}
for a proof).
\begin{equation}\label{thm:geomupperbound}
    \Delta_{{\cal Z}}(\rho)\geq   C_{{\cal Z}}^{REC}(\rho).
\end{equation}
This sharpens a similar result of \cite{chitambar2015assisted} where the free
operations defining $C_{\cal Z}^{REC}$ on the right hand side were the more
restrictive $LQICC_{\cal Z}$ operations. Still, since $LQICC_{\cal Z}$ is strictly included in $GOIA_{\cal Z}$, the bounds derived in~\cite{chitambar2015assisted,streltsov2015hierarchies} are valid for our framework and we get that Eq.~\ref{thm:geomupperbound} is tight for maximally correlated states ($\rho=\sum_{cc'} \rho_{cc'} |c\rangle\langle c'|\otimes |c\rangle\langle c'|$~ \cite{Rains.99,HH.01}) and for product states~\cite{chitambar2015assisted} (where the recoverable coherence is just the distillable coherence on the $A$-part calculated in~\cite{WinterY2015}), as well as for general pure states~\cite{streltsov2015hierarchies} (as the proofs simplify in our framework, we show them nonetheless in the \hyperref[lower_bounds]{Appendix}). The bound is also tight for quantum-classical states. We have to leave open the question of whether the bound can be reached in general.\\

{\bf \em Basis-independent recoverable coherence and discord ---\phantomsection{}\addcontentsline{toc}{section}{Coherence and Discord}}
In the previous section, we have presented a framework which specifies coherence
in some fixed basis $\mathcal{Z}$. However, it might be useful in some contexts
to work without this constraint. The natural extension would then be to ask
what is $C_{\cal Z}^{REC}$ of the state in the most unfavourable case\footnote{At a first glance,
the maximum recoverable coherence might also seem a meaningful quantity. However, it depends strongly on the dimensionality
of the basis, and would be saturated for any pure state by choosing as the incoherent basis the one conjugate to the local Schmidt basis.
}.
For product states, it is clear that the choice of the eigenbasis of $\rho_A$
is the worst case. In general, this does not need to be the case. For instance,
consider the state
$$ \rho=\epsilon \nketbra{\uparrow}{\uparrow}\otimes\nketbra{\uparrow}{\uparrow}
+ \tfrac{1-\epsilon}{2} \left(\nketbra{\leftarrow}{\leftarrow}\otimes\nketbra{\uparrow}{\uparrow} +\nketbra{\rightarrow}{\rightarrow}\otimes\nketbra{\downarrow}{\downarrow}\right).$$
In the limit $\epsilon \rightarrow 0$, $C^{REC}_{\cal Z}\rightarrow 1$ using the local eigenbasis ${\cal Z}=\{\ket{\downarrow},\ket{\uparrow}\}$, yet $C^{REC}_{\cal Z'}\rightarrow 0$ for the choice ${\cal Z}'=\{\ket{\leftarrow},\ket{\rightarrow}\}$. With this in mind, we define the
\emph{basis-independent recoverable coherence}
\begin{equation}\label{def:DC}
    C^{REC}_{min}[\rho]=\min_{{\cal Z}}C^{REC}_{{\cal Z}}(\rho),
\end{equation}
as the minimum recoverable coherence regarding the most unfavourable basis. Due to
Eq.~\ref{thm:geomupperbound}, we obtain
\begin{equation}
    C^{REC}_{min}[\rho] \leq  \min_{\cal Z}\Delta_{\cal Z}(\rho) = \Delta^{A \rightarrow B} (\rho).
\end{equation}
That is, the basis-independent recoverable coherence is upper bounded by the thermal discord
$\Delta^{A \rightarrow B} (\rho)$ (also called one-way information deficit). Thermal discord
represents the difference between the work that can be extracted from a system in the state
$\rho$ by performing either global or local operations on $A$~\cite{Zurek.03,Horodeckies.05}.
Additionally, it corresponds to a particular case of a measure of \emph{discord}, a type of
non-classicality of a quantum state beyond the notion of entanglement
\cite{OZ.01,vedral_discord,Modi2012,Streltsov2015a}, quantifying how much a given state
fails to belong to the set of Classical-Quantum or pointer states \cite{OZ.01,DVB.10}:
\begin{equation}
CQ:= \bigcup_{\cal{Z}} CQ_{\cal{Z}} .
\end{equation}
See Fig.~\ref{fig:sets} for a picture of the relevant sets.
The most prominent features of discord quantifiers are~\cite{Modi2012} their
\begin{itemize}
	\item Vanishing, iff the state is in $CQ$, and
	\item Invariance under local unitaries.	
\end{itemize}
Therefore discord quantifiers are asymmetric with respect to the swap of $A$ and $B$.
One may also ask that a discord quantifier is nonnegative, bounded from above by the
entropy of $\rho_A$ and suitably normalized, such that the measure coincides with the
entanglement for the singlet state in the qubit case.
We note that the basis-independent recoverable coherence $C^{REC}_{min}$ is also a discord quantifier since it has the above mentioned properties.
For pure states, distillable entanglement is the same as thermal discord
\cite{bennett_pure_entanglement_distillation}, which is an upper bound for the
basis-independent recoverable coherence, which in turn is an upper bound for
entanglement as noted above. So, for pure states these quantities all coincide
(also see~\cite{chitambar2015assisted} for a different argument).
\begin{figure}[h]
	\centering
	\scalebox{.5}{\includegraphics{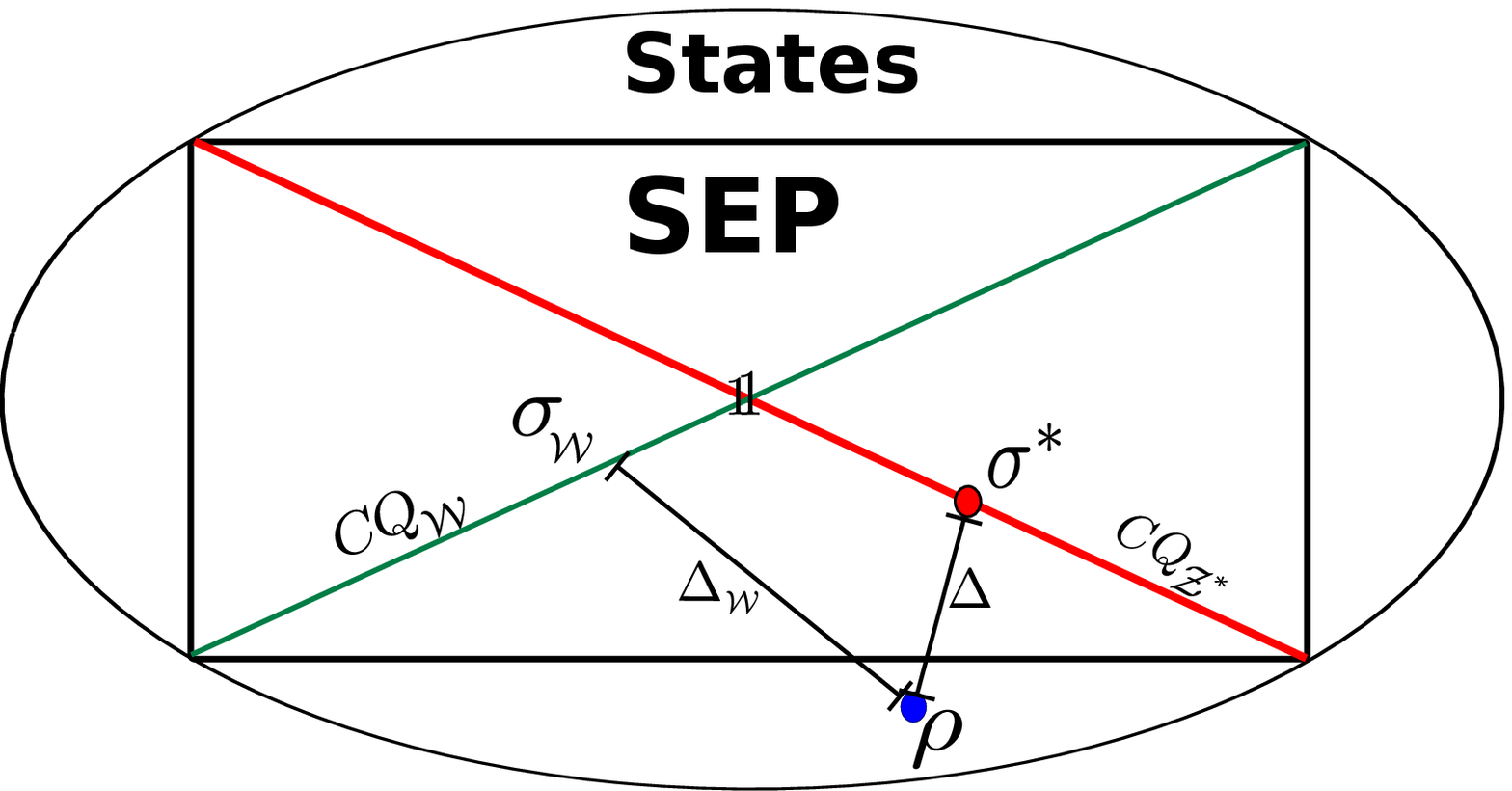}}
	\caption{{\bf
  Scheme of the discussed sets.} The set of zero discord states, $CQ$, is the union of all $CQ_{\cal Z}$ sets (e.g. $\cal Z=\cal W$). Each of the $CQ_{\cal Z}$ sets is convex, but $CQ$ is not convex. The union $CQ$ is contained within  the convex  hull of  the $CQ_{\cal Z}$, which is the set of separable states. The intersection of all $CQ_{\cal Z}$ are the states of the form $\frac{\id}{\tr \id} \otimes \rho_B$. The corners of the separable set correspond to pure states, which are shared with (in principle, many) incoherent lobes. $\cal Z^{\ast}$ is the basis in which the geometric distance $\Delta_{\cal Z}$ to $CQ_{\cal Z}$ gets minimized, i.e. one with $\sigma_{\cal Z}$ nearest  to $\rho$ in relative entropy.}
	\label{fig:sets}
\end{figure}

{\bf \em Application to the DQC1 protocol ---\phantomsection{}\addcontentsline{toc}{section}{DQC1}}
In this section we apply the above mentioned results to analyse the resources 
involved in the DQC1 protocol~\cite{Knill,DSC.08}. The goal of DQC1 is 
to determine the trace of a n-qubit unitary operator ${\bf U}$, which is a 
very challenging task in the realm of classical physics. The DQC1 protocol accomplishes this task 
by making use of a maximally coherent control qubit 
$\frac{\ket{0}+\ket{1}}{\sqrt{2}}$
as a probe and a maximally mixed state 
on the remaining target system 
(see Fig.~\ref{fig:dqc1}). 
After the action of the controlled unitary 
$\ketbra{0}{0}\otimes \id + \ketbra{1}{1}\otimes U$, 
the state of the probe encodes the trace of the unitary in the coherent bases 
($\{\frac{\ket{0}+\ket{1}}{\sqrt{2}},\frac{\ket{0}-\ket{1}}{\sqrt{2}}\}$ and $\{\frac{\ket{0}+i \ket{1}}{\sqrt{2}},\frac{\ket{0}-i \ket{1}}{\sqrt{2}}\}$). 
Measuring the probe in these bases ends the protocol. Of course, to read  out 
the result 
we need to perform repeated measurements of the final state, implying that we need to repeat the protocol many times to gain a certain degree of accuracy. If the initial state is not maximally coherent, we are still able to perform the algorithm, but we need a larger number of runs to reach the same precision~\cite{lanyon}. Interestingly, the protocol remains efficient also in the case of probes in a highly mixed state, even when the bipartite entanglement between the probe and any part of the system is  small or even vanishes during the entire protocol.
\begin{figure}[h]
	\centering
	\scalebox{.25}{\includegraphics{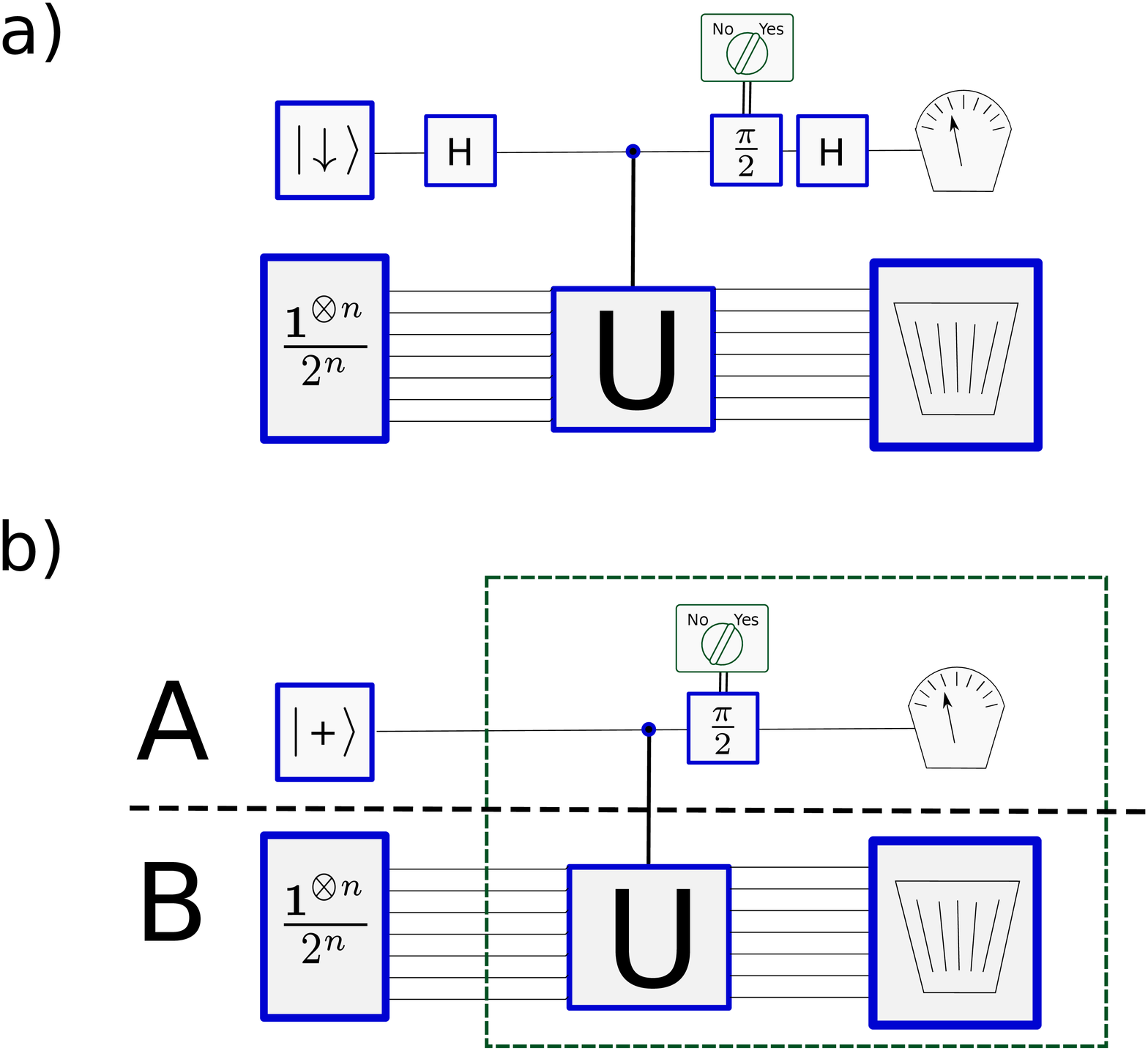}}
	\caption{{\bf Scheme of DQC1 protocol.} Top: the standard form of the protocol. Bottom: The protocol from the point of view of the incoherent-quantum partition. In this case, the first Hadamard gate was replaced by a probe (system A) in a fully coherent state($|+\rangle$), while the final Hadamard gate was exchanged by a final
		destructive measure in a maximally coherent basis (e.g. the basis $\{\ket{+},\ket{-}\})$.}
	\label{fig:dqc1}
\end{figure}
\noindent This observation provided motivation to look 
at different measures of quantumness, such as multipartite entanglement~\cite{ParPle.02} and quantum discord~\cite{DSC.08,ma2015converting}. But while it is not clear why one should look for multipartite entanglement in a setting that physically is bipartite, quantum discord is problematic as a resource since the zero discord set is not convex and thus mixing two zero discord states (which amounts to forgetting which of the two one prepared) can provide you some non-zero discord state (see Fig~\ref{fig:sets}). 
On the other hand, as it was pointed out in~\cite{ma2015converting}, the minimal 
requirement for the DQC1 protocol to work is the presence of some amount of 
coherence in the probe. We can make this statement more precise by remembering 
that, to obtain the expectation value which encodes $\mathrm{Tr}\mathbf{U}$, the 
protocol should be performed many times, consuming on each run a fresh 
qubit probe~\cite{DSC.08}. The number of runs needed to reach some desired precision 
depends directly on the degree of coherence of the probe. Suppose we have 
$m$ copies of the joint initial state {$\rho=\rho_0\otimes \id_{target}/\dim$}, where  $\dim$ is the dimension of the target system and $\rho_0$ is the (general) qubit state of the probe before applying the controlled unitary. 
We show in the \hyperref[appendix]{Appendix} that the precision (i.e. the number of binary significant digits) of the estimated ${\rm Tr \mathbf{U}}$ is (up to a constant) given by a function of $C^{REC}_{\cal Z}$, i.e.:
\begin{equation}\label{eq:precvscoh}
	prec(\tr {\bf U}) \approx -\log_2 |{\SE}\left( \frac{\tr {\bf U}}{\dim
	}\right)| \approx    \frac{1}{2}\log_2(C_{\cal Z}^{REC}(\rho_{0}))\;,
\end{equation}
\noindent where $\SE(\hat{x})$ denotes the \emph{standard error of the mean}\cite{devore.08} associated to the random variable $\hat{x}$.

Notice that in the present formalism, entanglement and coherence are interconvertible, so the amount of bipartite entanglement that can be produced during the protocol between the probe and any part of the target system is bounded by $C_{\cal Z}^{REC}(\rho)$, for any state $\rho$ of the total system at any stage of the protocol. We note that if at any point in the protocol any discord quantifier is non-zero, this implies that the state is not quantum-classical and therefore $C_{\cal Z}^{REC}$ is non-zero. By the monotonicity of $C_{\cal Z}^{REC}$ under $GOIA_{\cal Z}$, we find that the state of the probe at the beginning cannot have been incoherent. Therefore any discord quantifier is a witness for recoverable coherence and the applicability of the DQC1 protocol.\\

{\bf \em Conclusion ---\phantomsection{}\addcontentsline{toc}{section}{Conclusion}}
In this work, a framework for the description of incoherent systems 
controlling quantum systems was proposed. The set of operations over 
the composite system ($GOIA_{\cal{Z}}$), together with its associated 
minimal invariant set define a formal resource theory, in which
the resource is the amount of coherence that can be recovered on the 
control side.
Using the connections with other frameworks~\cite{Streltsov2015,YaoXG+2015,chitambar2015assisted,chitambar2015relating,streltsov2015hierarchies,hu2015,hu2015coherence,ma2015converting},
we extended many of the previous results to the $GOIA_{\cal{Z}}$ framework and showed that the associated resources, the recoverable coherence and recoverable entanglement, are equivalent. We upper bounded $C_{\cal Z}^{REC}$ by a geometric functional and found that the latter is a monotone of the theory.
By looking at the least favourable choice of the incoherent basis, the amount of resource associated to a given state is a discord quantifier. This quantifier is bounded from above by the thermal discord of the state. 
Finally, we exemplified our findings by
calculating the precision of the DQC1 protocol with a mixed control qubit and stated it in terms of the resource of our theory---the recoverable coherence.\\

{\bf \em Acknowledgements ---}
We gratefully acknowledge discussions with Andrea Smirne and Antony Milne.
M.M. is supported by CONICET. D.E., N.K. and M.B.P. are supported by an Alexander 
von Humboldt Professorship, the ERC Synergy Grant BioQ and the EU project QUCHIP 
and EQUAM.
\vspace*{1cm}
{ \begin{center}
			{\bf  \large Appendix }\phantomsection{}\addcontentsline{toc}{part}{Appendix} \label{appendix}
		\end{center} }
{\bf \em Related works ---\phantomsection{}\addcontentsline{toc}{section}{Related works}}\label{recap}
One of the first papers relating coherence with entanglement was~\cite{Streltsov2015}, where they looked at how one can restrict coherence theory further by splitting the space of free states into two incoherent parts. This gives rise to control operations and thus allowed operations can produce entanglement by using up coherence. This line of thought was further developed in~\cite{YaoXG+2015}, where they showed an equality between symmetric discord~\cite{modi2010unified} and coherence in such a framework. Another approach was taken in~\cite{KilloranSP2015}, relating a more general form of superpositions than coherence to entanglement via control gates.  In~\cite{chitambar2015assisted} they instead looked at possible extensions of coherence theory and introduced the framework of local operations and classical communication ($LQICC_{\cal Z}$). This and modifications thereof where subsequently discussed in~\cite{streltsov2015hierarchies,chitambar2015relating}, which were mostly concerned with the relation of coherence to entanglement and structure of the respective theories. After these~\cite{hu2015,hu2015coherence} analysed steering induced coherence in the $LQICC_{\cal Z}$ framework and defined a measurement induced disturbance measure for coherence, to some extent related to discord. Finally ~\cite{ma2015converting} defined the set of quantum operations that preserve the form of incoherent-quantum states.
In a subsequent work~\cite{Yadin2015}, a subset of these operations was analysed independently of this letter.\\
{\bf \em Choice of the incoherent operations ---\phantomsection{}\addcontentsline{toc}{section}{Choice of incoherent operations} }\label{incc}
In our theory of coherent control of a quantum system, we need to choose which theory inside quantum mechanics corresponds best to a classical theory of labels, i.e. we need to specify the theory of coherence we choose to model the free operations on the "classical" side $A$ of our set-up. Clearly, we need to ensure that no coherence is produced, i.e. we only consider maps which, starting from an arbitrary incoherent state, result in yet another incoherent state (even after a possible post-selection).
This leads to the coherence theory defined in~\cite{BaumgratzCP14} and used here. 
One could now argue that one should restrict the allowed operations further. Indeed there is by now an entire hierarchy of  proposals of possible theories (see e.g. the appendix of~\cite{Streltsov2015b} for an overview), but the discussion of which is the most meaningful is far from settled yet (and context dependent). We adopted our current choice of model principally because it allows for a particularly
wide class of operations and hence will give particularly strict bounds in the sense that
a process that is impossible under this set will also be impossible under essentially all
other possible choices of coherence theory.

A totally different choice would be the theory of U(1)-covariance~\cite{marvian2014}, which has been successfully applied in the context of thermal operations (see e.g.~\cite{lostaglio2014}). The reason we do not use it here is simple: permutations are not allowed operations there (unless on degenerate subspaces), but since this only requires changing labels in our setting, this should be allowed. 

{\bf \em Proofs ---\phantomsection{}\addcontentsline{toc}{section}{Proofs}}\label{proofs} Here, the proofs of the properties presented in the main text are shown.
\begin{lem}[{Minimal relative entropy to $CQ_{\cal Z}$~--- } Eq. 6. Also see~\cite{chitambar2015assisted} based on arguments of \cite{modi2010unified}]\label{geom}
	Let $\{\Pi_c=|c \rangle\langle c|\otimes \id \}$ be the set of projectors on the incoherent basis  ${\cal Z}$ on $A$. Then:
	\begin{equation}
		\Delta_{\cal Z}(\rho)=S(\rho||\sum_c \Pi_c\rho\Pi_c).
	\end{equation}
\end{lem}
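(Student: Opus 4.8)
The plan is to show that the minimization of $S(\rho\|\sigma)$ over $\sigma\in CQ_{\cal Z}$ is achieved precisely at $\sigma^{\ast}=\sum_c\Pi_c\rho\Pi_c$, and then identify the minimal value with $S(\rho')-S(\rho)$. First I would fix an arbitrary $\sigma\in CQ_{\cal Z}$, so that $\sigma=\sum_c p_c\,\ketbra{c}{c}\otimes\sigma_c$, and expand $S(\rho\|\sigma)=-S(\rho)-\tr[\rho\log_2\sigma]$. The key observation is that $\log_2\sigma$ is block-diagonal with respect to the projectors $\{\Pi_c\}$, i.e.\ $\log_2\sigma=\sum_c\Pi_c(\log_2\sigma)\Pi_c$, because $\sigma$ itself is. Hence $\tr[\rho\log_2\sigma]=\tr[(\sum_c\Pi_c\rho\Pi_c)\log_2\sigma]=\tr[\sigma^{\ast}\log_2\sigma]$, where $\sigma^{\ast}:=\sum_c\Pi_c\rho\Pi_c$. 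This replaces $\rho$ by its dephased version $\sigma^{\ast}$ inside the cross term at no cost.

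Next I would add and subtract $\tr[\sigma^{\ast}\log_2\sigma^{\ast}]$ to write
\begin{equation}
S(\rho\|\sigma)=-S(\rho)-\tr[\sigma^{\ast}\log_2\sigma^{\ast}]+\bigl(\tr[\sigma^{\ast}\log_2\sigma^{\ast}]-\tr[\sigma^{\ast}\log_2\sigma]\bigr),
\end{equation}
and recognize the bracketed term as $S(\sigma^{\ast}\|\sigma)\geq 0$ by Klein's inequality, with equality iff $\sigma=\sigma^{\ast}$. Note that $\sigma^{\ast}=\sum_c \Pi_c\rho\Pi_c$ does lie in $CQ_{\cal Z}$ (it has the required classical-quantum block form after normalizing the blocks), so it is an admissible competitor in the minimization. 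Therefore $S(\rho\|\sigma)\geq -S(\rho)-\tr[\sigma^{\ast}\log_2\sigma^{\ast}]=S(\sigma^{\ast})-S(\rho)$ for every $\sigma\in CQ_{\cal Z}$, with the bound attained at $\sigma=\sigma^{\ast}$; this proves $\Delta_{\cal Z}(\rho)=S(\rho\|\sigma^{\ast})=S(\rho\|\sum_c\Pi_c\rho\Pi_c)$, which is the claim of the lemma. The identification $S(\sigma^{\ast})=S(\rho')$ with $\rho'$ the fully dephased state is immediate since on $A$ both expressions refer to the same pinching; in the notation of Eq.~(6), $\rho'=\sigma^{\ast}$.

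I expect the only subtlety — not really an obstacle — to be the careful justification that $\log_2\sigma$ commutes with the $\Pi_c$ and annihilates the off-block part of $\rho$ under the trace; this follows because $\sigma$, being block-diagonal, has its spectral projectors refining the $\{\Pi_c\}$, so any function of $\sigma$ is a sum over $c$ of operators supported on $\Pi_c$, and $\tr[\Pi_c\rho\,\Pi_{c'}X]=0$ whenever $X=\Pi_{c'}X\Pi_{c'}$ and $c\neq c'$ is handled by cyclicity of the trace together with $\Pi_c\Pi_{c'}=\delta_{cc'}\Pi_c$. One should also note the degenerate case where $\sigma$ is not full rank: then $S(\rho\|\sigma)=+\infty$ unless $\rmsupp\rho\subseteq\rmsupp\sigma$, but this case is harmless because $\sigma^{\ast}$ automatically satisfies the support condition relative to $\rho$, so the infimum is unaffected. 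With these remarks the proof is complete.
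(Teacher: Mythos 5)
Your proof is correct, but it takes a genuinely different route from the paper's. The paper argues variationally: it notes that $\sigma\mapsto S(\rho\|\sigma)$ is convex, parametrises full-rank states of $CQ_{\cal Z}$ by an exponential map $\sigma=\exp(-\mathbf{h})/Z$ over the sub-algebra commuting with the $\Pi_c$, and identifies $\sigma^{\ast}=\sum_c\Pi_c\rho\Pi_c$ as the unique stationary point, reaching non-full-rank states only as limits. You instead prove the exact Pythagorean identity $S(\rho\|\sigma)=S(\sigma^{\ast})-S(\rho)+S(\sigma^{\ast}\|\sigma)$ for every $\sigma\in CQ_{\cal Z}$, using that $\log_2\sigma$ is $\{\Pi_c\}$-block-diagonal so the cross term only sees the pinched state, and then invoke Klein's inequality. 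Your decomposition buys several things the paper's sketch leaves implicit: the equality case (hence uniqueness of the minimiser) comes for free from the strict positivity of $S(\sigma^{\ast}\|\sigma)$ for $\sigma\neq\sigma^{\ast}$; rank-deficient $\sigma$ is handled directly rather than by a limiting argument; and there is no need to justify that the convex functional attains its minimum at an interior stationary point. The paper's variational method, on the other hand, is the one that generalises to minimisations over other exponential families where no closed-form minimiser is available. Your two peripheral checks --- that $\sigma^{\ast}\in CQ_{\cal Z}$ and that $\rmsupp\rho\subseteq\rmsupp\sigma^{\ast}$ so the attained value is finite --- are both needed and both hold, so the argument is complete.
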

\begin{proof}
	We start by observing that $S(\rho||\sigma)$ is a convex function of $\sigma$ (for fixed $\rho$), and hence, the global minimum $\sigma^*$ is the unique stationary point of that function (and it exists). Let us start the calculation of the stationary points,
	by parametrising $\sigma$ by an exponential map: let $\{{\bf o}_k\}$ be a basis for the sub-algebra of (zero trace) hermitian operators in $AB$ satisfying the condition $[\Pi_c,{\bf o}_k]=0$. It is clear that any full rank state $\sigma \in CQ_{\cal Z}$ can be written as $\sigma=\exp(-{\bf h})/Z$, with $Z={\rm Tr}\exp(-{\bf h})$ and ${\bf h}=\sum_k \lambda_k {\bf o}_k$. Notice that non-full rank states can be reached as a limit. In this parametrisation the stationary conditions reduce to
	$$
	{\rm Tr }(\rho {\bf o}_k)-{\rm Tr }(\sigma^* {\bf o}_k)=0.
	$$
	This implies that for the global minimum $\sigma^*$, and for any observable ${\bf O}$, ${\rm Tr }[\rho \sum_c\Pi_c {\bf O}\Pi_c]={\rm Tr }[\sum_c\Pi_c\rho \Pi_c {\bf O}]$ has to equal ${\rm Tr }[\sigma^* {\bf O}]$. Therefore, the global minimum is $\sigma^*=\sum_c\Pi_c\rho \Pi_c$.
\end{proof}

{\bf \em Properties of $\Delta_{\cal Z}$ --- } Notice that from Lemma~\ref{geom} it follows that  the monotone $\Delta_{\cal Z}$ is the same as the recoverable coherence in the basis $\cal Z$ by incoherent operations~\cite{WinterY2015}
$
\Delta_{\cal Z}(\rho_A\otimes\rho_B)=C_{\cal Z}(\rho_A)
$ and that the monotone is additive:
\begin{lem}[Additivity]\label{lem:additivity}
	\begin{equation}
		\Delta(\rho^n)=n \Delta(\rho).
	\end{equation}
\end{lem}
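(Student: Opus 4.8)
The plan is to use the closed-form expression for $\Delta_{\cal Z}$ established in Lemma~\ref{geom}, namely $\Delta_{\cal Z}(\rho)=S(\rho||\rho')$ where $\rho'=\sum_c\Pi_c\rho\Pi_c$ is the fully decohered state, together with the identity $\Delta_{\cal Z}(\rho)=S(\rho')-S(\rho)$ quoted in Eq.~\ref{def:distcqia}. Both of these building blocks reduce the problem to additivity properties of the von Neumann entropy, which are standard.

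\begin{proof}[Proof sketch]
First I would fix the incoherent basis ${\cal Z}$ on the $n$-fold system $A^{\otimes n}$ to be the product basis $\{\ket{c_1}\otimes\cdots\otimes\ket{c_n}\}$ built from the single-copy basis, so that the decohering map on $\rho^{\otimes n}$ factorizes: $(\rho^{\otimes n})' = (\rho')^{\otimes n}$, where $\rho'=\sum_c\Pi_c\rho\Pi_c$. This is the key structural observation — decohering each copy independently is the same as decohering the whole in the product basis. Then, using $\Delta_{\cal Z}(\sigma)=S(\sigma')-S(\sigma)$ from Eq.~\ref{def:distcqia}, I would compute
\begin{equation}
\Delta(\rho^{\otimes n}) = S\bigl((\rho')^{\otimes n}\bigr) - S\bigl(\rho^{\otimes n}\bigr) = n\,S(\rho') - n\,S(\rho) = n\,\Delta(\rho),
\end{equation}
where the middle equality is just additivity of the von Neumann entropy under tensor products, $S(\tau^{\otimes n}) = n\,S(\tau)$. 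Alternatively one can argue directly from $S(\rho^{\otimes n}||(\rho')^{\otimes n}) = n\,S(\rho||\rho')$, which is the well-known additivity of relative entropy on product states.

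The one point that deserves care — and which I expect to be the only real obstacle — is justifying that the product basis is the \emph{correct} choice of incoherent basis for $A^{\otimes n}$, i.e. that the notation $\Delta(\rho^n)$ in the lemma statement refers to $\Delta_{{\cal Z}^{\otimes n}}(\rho^{\otimes n})$ with the product basis, rather than to some optimization over bases on the larger system. Given the convention implicit in the definition of $r^\epsilon(\rho,n)$ and $C^{REC}_{\cal Z}$ (which always works with a fixed ${\cal Z}$ and its natural tensor extension), this is the intended reading, and once that is granted the proof is the two-line computation above. No deeper estimate or limiting argument is needed, since additivity here is exact rather than asymptotic.
\end{proof}
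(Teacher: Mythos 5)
Your proposal is correct and follows essentially the same route as the paper: both rest on the observation that decohering $\rho^{\otimes n}$ in the product basis factorizes as $(\rho')^{\otimes n}$, after which the result is additivity of (relative) entropy on product states. Your alternative phrasing via $S(\rho^{\otimes n}||(\rho')^{\otimes n})=n\,S(\rho||\rho')$ is in fact exactly the computation the paper writes out.
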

\begin{proof}
	\begin{eqnarray*}
		\Delta(\rho^{\otimes n})&=& S(\rho^{\otimes n}|| \sum_{c_1,c_2,\ldots c_N}\Pi_{c_1,c_2,\ldots,c_n}\rho^{\otimes n}\Pi_{c_1,c_2,\ldots,c_n})\\
		&=& S\left(\rho^{\otimes n}|| \left(\sum_{c}\Pi_{c}\rho\Pi_{c}\right)^{\otimes n}\right)\\
		&=& n S\left(\rho|| \left(\sum_{c}\Pi_{c}\rho\Pi_{c}\right)\right)=n \Delta(\rho).
	\end{eqnarray*}
\end{proof}

We can also verify that $\Delta_{\cal Z}$ is convex:
\begin{lem}[Convexity]\label{lem:convexity}
	\begin{equation}
		\Delta_{\cal Z}(\sum_i p_i \rho_i)\leq \sum_i p_i \Delta_{\cal Z}( \rho_i)\,.
	\end{equation}
\end{lem}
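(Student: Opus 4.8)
The plan is to exploit the characterization from Lemma~\ref{geom}, namely $\Delta_{\cal Z}(\rho) = S(\rho') - S(\rho)$ where $\rho' = \sum_c \Pi_c \rho \Pi_c$ is the fully dephased state, together with the key observation that the dephasing map $\rho \mapsto \rho'$ is linear. Thus if $\rho = \sum_i p_i \rho_i$, then $\rho' = \sum_i p_i \rho_i'$ as well. The proof then reduces to a statement about the concavity and subadditivity properties of the von Neumann entropy applied to the two ensembles $\{p_i, \rho_i\}$ and $\{p_i, \rho_i'\}$.

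First I would write $\Delta_{\cal Z}(\sum_i p_i \rho_i) = S\big((\sum_i p_i \rho_i)'\big) - S(\sum_i p_i \rho_i) = S(\sum_i p_i \rho_i') - S(\sum_i p_i \rho_i)$. Then I would use the standard bound on the entropy of a mixture: for any ensemble $\{q_i, \tau_i\}$ one has $\sum_i q_i S(\tau_i) \le S(\sum_i q_i \tau_i) \le \sum_i q_i S(\tau_i) + H(\{q_i\})$, where $H$ is the Shannon entropy. Applying the upper bound to the ensemble $\{p_i, \rho_i'\}$ gives $S(\sum_i p_i \rho_i') \le \sum_i p_i S(\rho_i') + H(\{p_i\})$, and applying the lower bound (concavity) to the ensemble $\{p_i, \rho_i\}$ gives $S(\sum_i p_i \rho_i) \ge \sum_i p_i S(\rho_i)$. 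Subtracting,
\begin{equation*}
\Delta_{\cal Z}\Big(\sum_i p_i \rho_i\Big) \le \sum_i p_i \big(S(\rho_i') - S(\rho_i)\big) + H(\{p_i\}) = \sum_i p_i \Delta_{\cal Z}(\rho_i) + H(\{p_i\}).
\end{equation*}
This is convexity up to the extra Shannon term, which is not quite what is claimed.

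To remove the $H(\{p_i\})$ term, the standard trick is to attach a classical flag: consider the extended state on $A'AB$ given by $\tilde\rho = \sum_i p_i \ketbra{i}{i}_{A'} \otimes \rho_i$, where $\{\ket{i}\}$ is chosen as part of the incoherent basis on the enlarged $A$-system $A'A$. One computes $\Delta_{\cal Z}(\tilde\rho) = \sum_i p_i \Delta_{\cal Z}(\rho_i)$ directly, since the dephased state of $\tilde\rho$ is $\sum_i p_i \ketbra{i}{i} \otimes \rho_i'$ and the entropies split as $H(\{p_i\}) + \sum_i p_i S(\rho_i')$ and $H(\{p_i\}) + \sum_i p_i S(\rho_i)$ respectively, so the Shannon terms cancel. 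Then discarding (tracing out) the flag register $A'$ is an allowed $GOIA_{\cal Z}$ operation, and by the monotonicity of $\Delta_{\cal Z}$ under $GOIA_{\cal Z}$ (Proposition~1 in the Appendix, which I may invoke) we get $\Delta_{\cal Z}(\sum_i p_i \rho_i) = \Delta_{\cal Z}(\mathrm{Tr}_{A'} \tilde\rho) \le \Delta_{\cal Z}(\tilde\rho) = \sum_i p_i \Delta_{\cal Z}(\rho_i)$, which is exactly the claim.

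The main obstacle — or rather the only subtle point — is the logical ordering relative to the monotonicity proof: if convexity is used as an ingredient in establishing monotonicity of $\Delta_{\cal Z}$, then invoking monotonicity here would be circular. The cleaner route, which I would actually take, avoids monotonicity entirely and instead proves convexity directly from entropy inequalities: apply the bound on mixture entropy to $\{p_i, \rho_i'\}$ for the upper piece and concavity to $\{p_i, \rho_i\}$ for the lower piece as above, obtaining the inequality with the $H(\{p_i\})$ correction, and then observe that this correction is removed precisely because $\Delta_{\cal Z}$ is, by Lemma~1, a \emph{difference} of two entropies evaluated on ensembles sharing the \emph{same} weights $\{p_i\}$ — so one should instead invoke the sharper joint statement that for states of the block-diagonal (dephased) form the entropy of the mixture is \emph{exactly} $H(\{p_i\}) + \sum_i p_i S(\rho_i')$ when the $\rho_i'$ have support in distinguishable sectors, which is not generally true, forcing the flag construction after all. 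Hence the flag-plus-discard argument is the robust one, and I would present it, making sure to note that the discarding step is item~3 in Definition~\ref{def:framework} so that no appeal to the full monotonicity proposition is needed — only the trivial fact that $\Delta_{\cal Z}$ cannot increase under tracing out an incoherent flag, which follows immediately from monotonicity of relative entropy under the partial trace (a CPTP map).
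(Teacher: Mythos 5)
Your final argument is correct, but it takes a genuinely different route from the paper. The paper's proof is three lines: it uses the variational form $\Delta_{\cal Z}(\rho)=\min_{\sigma\in CQ_{\cal Z}}S(\rho\|\sigma)$, picks the optimal $\sigma_i$ for each $\rho_i$, notes that $\sum_i p_i\sigma_i\in CQ_{\cal Z}$ because $CQ_{\cal Z}$ is convex, and concludes by the \emph{joint convexity} of the relative entropy, $S(\sum_i p_i\rho_i\|\sum_i p_i\sigma_i)\le\sum_i p_i S(\rho_i\|\sigma_i)$. You instead work from the closed form $S(\rho')-S(\rho)$, correctly diagnose that naive entropy mixing bounds leave an unwanted $H(\{p_i\})$ term, and then repair this with a classical flag register: $\Delta$ evaluated on $\tilde\rho=\sum_i p_i\ketbra{i}{i}_{A'}\otimes\rho_i$ equals $\sum_i p_i\Delta_{\cal Z}(\rho_i)$ exactly (the Shannon terms cancel), and tracing out the flag commutes with dephasing, so data processing of the relative entropy under the partial trace gives the claim. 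Your care about circularity is appreciated but unnecessary here --- the paper's Proposition 1 on monotonicity does not use convexity, so invoking it would have been safe; still, reducing everything to data processing under the partial trace is clean. What each approach buys: the paper's argument is the standard one for any distance-based quantifier over a convex free set and generalizes immediately to other jointly convex divergences; your flag construction proves the strictly stronger ``flagged additivity'' identity $\Delta_{\cal Z}(\sum_i p_i\ketbra{i}{i}\otimes\rho_i)=\sum_i p_i\Delta_{\cal Z}(\rho_i)$ along the way, which is of independent interest, at the cost of a considerably longer derivation (and a first attempt that should be cut from the final writeup).
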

\begin{proof}
	For each $i$, $\Delta_{\cal Z}(\rho_i)=S(\rho_i||\sigma_i)$ for certain $\sigma_i\in CQ_{\cal Z}$, therefore
	\begin{eqnarray*}
		\Delta_{\cal Z}(\sum_i p_i \rho_i)&=& \min_{\sigma \in CQ_{\cal Z}} S(\sum_i p_i \rho_i||\sigma)\\
		&\le& S(\sum_i p_i \rho_i||\sum_i p_i\sigma_i)\\
		&\le&  \sum_i p_i S(\rho_i||\sigma_i).
	\end{eqnarray*}
\end{proof}

Now, we can show that this quantity is non-increasing under $GOIA_{\cal Z}$:
\begin{prop}[Monotonicity on average]\label{Prop:monotone}
	\begin{equation}
		\Delta_{{\cal Z}}(\Lambda(\rho))\leq \Delta_{{\cal Z}}(\rho)\;\;\; \forall \Lambda \in GOIA_{\cal Z}, \;CPTP.
	\end{equation}
\end{prop}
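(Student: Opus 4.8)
The plan is to prove monotonicity by checking it separately for each of the four types of generating operations in Definition~\ref{def:framework}, using the fact that $\Delta_{\cal Z}$ is defined by a minimization of the relative entropy over $CQ_{\cal Z}$ and that relative entropy is monotone (contractive) under arbitrary CPTP maps. The unified observation that makes all four cases work is the following: if a generating operation $\Lambda$ (or its components $\Lambda_\alpha$ after postselection) maps $CQ_{\cal Z}$ into $CQ_{\cal Z}$, then taking $\sigma^*\in CQ_{\cal Z}$ optimal for $\rho$, we have $\Lambda(\sigma^*)\in CQ_{\cal Z}$ and hence $\Delta_{\cal Z}(\Lambda(\rho))\le S(\Lambda(\rho)\|\Lambda(\sigma^*))\le S(\rho\|\sigma^*)=\Delta_{\cal Z}(\rho)$ by data processing. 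So the crux is: each generator preserves the set $CQ_{\cal Z}$ (this was already asserted in the main text when introducing $CQ_{\cal Z}$, so I can lean on it), and one must handle the ``on average'' subtlety for the measurement/postselection step.

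Concretely I would proceed as follows. First, for incoherent operations on $A$ (type 1): an incoherent Kraus operator applied to a state in $CQ_{\cal Z}$ yields (after normalization) a state that is still block-diagonal in $\cal Z$ on $A$, i.e. still in $CQ_{\cal Z}$; combined with data processing this gives the bound, and since this is a deterministic channel there is no averaging issue. Second, for controlled unitaries $U_{control}=\sum_c\ketbra{c}{c}\otimes U_c$ (type 2): conjugating $\sum_c p_c\ketbra{c}{c}\otimes\rho_c$ by $U_{control}$ gives $\sum_c p_c\ketbra{c}{c}\otimes U_c\rho_c U_c^\dagger\in CQ_{\cal Z}$, so again $CQ_{\cal Z}$ is preserved and the unitary is deterministic. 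Third, adding an ancilla $\ket{0}_{B'}$ tensors on the right and removing one is a partial trace over part of $B$; in both cases $\sum_c p_c\ketbra{c}{c}\otimes(\cdot)$ structure is retained, so $CQ_{\cal Z}$ is preserved, and both are deterministic. Fourth, the measurement step (Eq.~\ref{eq:measure}): here a state $\rho_{AB}\mapsto\sum_c\ketbra{c}{c}_{B_{\rm reg}}\otimes K_c\rho K_c^\dagger$ — if we keep all outcomes this is a genuine CPTP channel that sends $CQ_{\cal Z}$ (with respect to the $A$-register) into $CQ_{\cal Z}$, so data processing again applies. For the postselected version, one uses the standard fact that relative entropy does not increase on average under a selective measurement: $\sum_\alpha q_\alpha S(\rho_\alpha\|\sigma_\alpha)\le S(\rho\|\sigma)$ where $\rho_\alpha,\sigma_\alpha$ are the postselected states, each $\sigma_\alpha$ still in $CQ_{\cal Z}$; minimizing each term over $CQ_{\cal Z}$ gives $\sum_\alpha q_\alpha\Delta_{\cal Z}(\rho_\alpha)\le\Delta_{\cal Z}(\rho)$, which is exactly the ``monotonicity on average'' statement.

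Finally, I would assemble the general case: an arbitrary $\Lambda\in GOIA_{\cal Z}$ is a finite composition of the above generators (some deterministic, some with branching), so monotonicity on average follows by chaining the single-step bounds — at each step the average of $\Delta_{\cal Z}$ over the branches does not increase, and by convexity (Lemma~\ref{lem:convexity}) one can also recombine branches without increasing it, so the composition behaves correctly. I would remark that the expression $\Delta_{\cal Z}(\rho)=S(\rho')-S(\rho)$ from Eq.~\ref{def:distcqia} and Lemma~\ref{geom} can alternatively be used to give a slicker direct argument for the deterministic generators, but the data-processing route is uniform and needs nothing beyond monotonicity of relative entropy.

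The main obstacle I anticipate is bookkeeping rather than conceptual: carefully stating the ``on average'' inequality for selective measurements in a way that meshes with compositions (so that when a later measurement branches, the earlier inequalities still combine), and making precise the claim — used but not reproved in the main text — that every generator preserves $CQ_{\cal Z}$, particularly the measurement generator where one must be careful about which subsystem carries the ``classical'' register after the map. Everything else reduces to invoking data processing for relative entropy and, where needed, convexity of $\Delta_{\cal Z}$.
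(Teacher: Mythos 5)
Your proposal is correct and rests on exactly the same mechanism as the paper's proof: $GOIA_{\cal Z}$ maps preserve $CQ_{\cal Z}$, so the optimizer $\sigma^*$ is carried into the free set and data processing for the relative entropy does the rest. The only difference is one of thoroughness --- the paper simply asserts $\Lambda(CQ_{\cal Z})\subset CQ_{\cal Z}$ and treats only the deterministic (CPTP) case as stated in the proposition, whereas you verify the invariance generator by generator and additionally supply the selective-measurement averaging argument that justifies the ``on average'' qualifier in the proposition's title.
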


\begin{proof}
	We start noticing that if $\Lambda$ is in $GOIA_{\cal Z}$, $\Lambda(CQ_{\cal Z})\subset CQ_{\cal Z}$. Therefore,
	\begin{eqnarray*}
		\Delta_{\cal Z}(\Lambda(\rho))&=&\min_{\sigma \in CQ_{\cal Z}} S(\Lambda(\rho)||\sigma)\\
		&\leq&\min_{\sigma \in \Lambda(CQ_{\cal Z})} S(\Lambda(\rho)||\sigma)\\
		&=&\min_{\sigma \in CQ_{\cal Z}} S(\Lambda(\rho)||\Lambda(\sigma))\\
		&\leq&\min_{\sigma \in CQ_{\cal Z}} S(\rho||\sigma)= \Delta_{\cal Z}(\rho),
	\end{eqnarray*}
	where the last inequality follows from the monotonicity of the relative entropy under CPTP maps~\cite{lindblad.74}.
\end{proof}

{\em \bf Proof of the geometric  upper bound (Eq. 7)---\phantomsection{}\addcontentsline{toc}{section}{Upper bound}}\label{upper bound} Here we present the proof of Eq. 7, stating that the geometric monotone defined by the relative entropy to the free states, upper bounds $C_{\cal Z}^{REC}$.

\begin{proof}
	Let $\Lambda_n$ be the optimal $GOIA_{\cal Z}$ map that produces from the state $\rho$, $m_n$ copies of the state $\ket{\Psi}$, $\epsilon$-near in fidelity to $\ket{+}$:
	$$
	\Lambda_n(\rho^n)=|\Psi\rangle\langle \Psi|^{\otimes m_n}.
	$$
	Now we observe that up to order $\epsilon$ (using the continuity of the fidelity and of the von Neumann entropy as functions of the state):
	\begin{eqnarray*}
		m_n&=&  S\left({\rm Tr}_R \Lambda(\rho^{\otimes n})\| {\rm Tr}_R \sum_c \Pi_c \Lambda(\rho^{\otimes n})\Pi_c \right)\\
		&\leq& S \left(\Lambda(\rho^{\otimes n})|\ \sum_c \Pi_c \Lambda(\rho^{\otimes n})\Pi_c\right)\\
		&=& \Delta_{\cal Z}(\Lambda(\rho^{\otimes n}))\\
		&\leq& \Delta_{\cal Z}(\rho^{\otimes n})\\
		&=& n \Delta_{\cal Z}(\rho).
	\end{eqnarray*}
	The first line follows from the definition of the relative entropy and the second from the monotonicity of this quantity under the partial trace. The third line follows from the definition of $\Delta_{\cal Z}$, the fourth from the monotonicity under $GOIA_{\cal Z}$ and the last one from additivity (Lemma~\ref{lem:additivity}).
	
	Now, from the definition of $C^{REC}_{{\cal Z}}(\rho)$ (Definition 4) we obtain
	$$
	C^{REC}_{{\cal Z}}(\rho) =
	\lim_{\epsilon\rightarrow\infty}\lim_{n\rightarrow \infty} \max_{\Lambda \in GOIA_{\cal Z}}  \frac{m(\Lambda,\rho^n)}{n}\leq
	\Delta_{{\cal Z}}(\rho).
	$$
\end{proof}

{\em \bf Lower bounds ---\phantomsection{}\addcontentsline{toc}{section}{Lower bounds}}\label{lower_bounds} A lower bound for $C_{\cal Z}^{REC}$ is provided by looking at the final coherence obtained after a specific protocol. A subfamily of such protocols consists on performing a measurement on the $B$ side, communicating the outcome,
and adding a label to the classical side:
$$
 \Lambda(\rho_{AB})= \sum_m  |k\rangle\langle k| \otimes ((\id\otimes M_k) \rho_{AB} (\id\otimes M_k)^\dagger),
$$
where $\{M_k\}$ defines a POVM. Tracing out $B$ we obtain for the relative entropy of coherence of the final state:
$$
C_{\cal Z}(\tr_B \Lambda(\rho))=\sum_k p_k C_{\cal Z}(\rho_k),
$$
with $p_k=\tr (M_k^\dagger M_k \rho)$, $\rho_k= |k\rangle\langle k| \otimes\tr_B (M_k^\dagger M_k \rho)$. We can  also see that this is the maximum amount of coherence that can be recovered by local means (i.e. in the $LQICC_{\cal Z}$ framework). Noticing that $C_{\cal Z}(\tr_B \Lambda(\rho))$ is a concave function on the set of POVMs on $B$ we can see that its maximum is attained on the boundary of the set. Following a similar reasoning as in the optimization of discord-like quantities \cite{Ham.04}, we can reduce the optimization problem to find a set of rank-1 projectors. Moreover, if  $\rank(\tr_B \rho)=r$, the maximum is attained for a POVM with at most $r^2$ elements.\\
As a corollary, we can notice that for the set of ``Quantum-Classical states'' $\rho=\sum_{k} p_k \rho_{k}^{A}\otimes |k\rangle \langle k|_B$ the lower bound coincides with the upper bound $\Delta_{\cal Z}(\rho)$ and hence, $C_{\cal Z}^{REC}(\rho)=\Delta_{\cal Z}(\rho)$.

{\em Non equivalence for $C_{\cal Z}^{REC}(\rho)$ and $\Delta_{\cal Z}(\rho)$ in the general case.} 
An open question about $C_{\cal Z}^{REC}$ is related to its numerical equivalence with the geometric measure for general mixed states. To illustrate the problem, let us consider the mixed state
$$
\rho= \frac{|\phi\rangle\langle \phi|_A}{2}\otimes |0\rangle \langle 0|_B + 
\frac{|\varphi\rangle\langle \varphi|_A}{2}\otimes |+\rangle \langle +|_B
$$
with $|\phi\rangle=\frac{|0\rangle +|1\rangle}{\sqrt{2}}$ and $|\varphi\rangle=\frac{|1\rangle +|2\rangle}{\sqrt{2}}$. For this state, $\Delta_{\cal Z}(\rho)\approx 0.8925$ while $C_{\cal Z}({\rm Tr}_B\rho)\approx  0.6887$. A better lower bound is given by the previous local protocol involving the optimal measurement on $B$, followed by adding an ancilla on $A$. By numerical optimisation over projective measurements on $B$  a lower bound of $C_{\cal Z}({\rm Tr}_B\Lambda_{loc}\rho)\approx  0.8167$ was obtained. This number is the best we can obtain by local means in a single shot protocol. However, to evaluate $C_{\cal Z}^{REC}(\rho)$, we should exhaust every protocol using infinite many copies, which would be possible only by providing a specific upper bound for $\rho$, and a protocol that saturates it.     

\begin{lem}[Recoverable coherence for pure states, also see~\cite{streltsov2015hierarchies}]\label{app:RCpurestates}
If $\rho_{AB}=| \psi \rangle \langle \psi |$  $C_{\cal Z}^{REC}(\rho_{AB})=\Delta_{\cal Z}(\rho_{AB})$.
\end{lem}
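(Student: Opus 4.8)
The plan is to combine the geometric upper bound $C_{\cal Z}^{REC}\le\Delta_{\cal Z}$ of Eq.~\ref{thm:geomupperbound} with a matching lower bound obtained from an explicit $GOIA_{\cal Z}$ protocol. First I would compute $\Delta_{\cal Z}(|\psi\rangle\langle\psi|)$ directly. Expanding in the incoherent basis of $A$ as $|\psi\rangle_{AB}=\sum_c\sqrt{p_c}\,|c\rangle_A\otimes|\phi_c\rangle_B$ with each $|\phi_c\rangle$ normalised and $\sum_c p_c=1$, one finds $\rho'=\sum_c\Pi_c|\psi\rangle\langle\psi|\Pi_c=\sum_c p_c\,|c\rangle\langle c|_A\otimes|\phi_c\rangle\langle\phi_c|_B$, which is block diagonal in $c$ with spectrum $\{p_c\}$. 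By Lemma~\ref{geom} together with $S(|\psi\rangle\langle\psi|)=0$, this gives $\Delta_{\cal Z}(|\psi\rangle\langle\psi|)=S(\rho')=H(\{p_c\})$, the Shannon entropy of the weights.

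Next I would exhibit a protocol achieving the rate $H(\{p_c\})$. For each $c$ choose a unitary $V_c$ on $\mathcal H_B$ with $V_c|\phi_c\rangle=|0\rangle_B$ and apply the controlled unitary $\sum_c|c\rangle\langle c|_A\otimes V_c$, which has the form of Eq.~\ref{eq:control} and hence lies in $GOIA_{\cal Z}$ (item~2 of Def.~\ref{def:framework}); it maps $|\psi\rangle$ exactly onto $\bigl(\sum_c\sqrt{p_c}|c\rangle_A\bigr)\otimes|0\rangle_B$. Discarding $B$ (item~3) leaves $A$ in the pure state $|\chi\rangle=\sum_c\sqrt{p_c}|c\rangle$, whose relative-entropy of coherence is $H(\{p_c\})$. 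Performing this disentangling copy by copy on $\rho^{\otimes n}$ and then running standard pure-state coherence concentration on $|\chi\rangle^{\otimes n}$ --- a purely incoherent operation on $A$ (item~1), namely the type-class measurement that projects onto incoherent basis strings of fixed type and relabels, exactly as in \cite{BaumgratzCP14,WinterY2015} --- distils a number of copies of $|+\rangle$ approaching $nH(\{p_c\})$, with fidelity tending to $1$. Hence the achievable rate is $H(\{p_c\})$, so $C_{\cal Z}^{REC}(|\psi\rangle\langle\psi|)\ge H(\{p_c\})=\Delta_{\cal Z}(|\psi\rangle\langle\psi|)$, and the upper bound closes the gap.

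Every step lies in $GOIA_{\cal Z}$, so admissibility is immediate; the disentangling step is exact and single-shot, and the only asymptotic ingredient is the known pure-state coherence distillation rate, whose error analysis is routine (and whose saturation under incoherent operations is precisely the content of \cite{BaumgratzCP14,WinterY2015}). The main point to verify carefully is that the per-copy controlled unitaries genuinely produce the product state $|\chi\rangle^{\otimes n}\otimes|0\cdots0\rangle_B$, so that the subsequent concentration is literally the standard protocol --- this is straightforward since the controlled unitaries act independently on the copies. I do not expect any real obstacle beyond bookkeeping the order of the $\epsilon,n$ limits, which is handled as in the definition of $C_{\cal Z}^{REC}$.
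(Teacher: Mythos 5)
Your proof is correct, but it takes a genuinely different reduction than the paper's. The paper goes the opposite way: it \emph{adds} correlation rather than removing it --- appending an ancilla $B'$ and applying the controlled translation ${\bf T}_c=\sum_i\ketbra{i}{i}_A\otimes\id_B\otimes\sum_k\ketbra{i\oplus k}{k}_{B'}$ to turn $\ket{\psi}$ into a locally incoherent \emph{maximally correlated} state $\ket{\psi'}$, for which tightness of Eq.~\ref{thm:geomupperbound} is imported from the assisted-distillation literature \cite{Rains.99,HH.01,chitambar2015assisted}; since ${\bf T}_c$ and its inverse are free, $\Delta_{\cal Z}$ is unchanged, and monotonicity of $C_{\cal Z}^{REC}$ under the free unitary closes the chain of inequalities. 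You instead \emph{disentangle}: the controlled unitary $\sum_c\ketbra{c}{c}\otimes V_c$ maps $\ket{\psi}$ exactly to the product state $\bigl(\sum_c\sqrt{p_c}\ket{c}\bigr)_A\otimes\ket{0}_B$, reducing the problem to single-party pure-state coherence concentration on $A$ at rate $H(\{p_c\})=S(\rho')=\Delta_{\cal Z}(\ketbra{\psi}{\psi})$, i.e.\ to the product-state case that the paper already notes is tight via \cite{WinterY2015}. Both reductions hinge on the same structural fact --- item 2 of Def.~\ref{def:framework} supplies arbitrary controlled unitaries from $A$ to $B$, which is precisely what $LQICC_{\cal Z}$ lacks --- but yours is the more self-contained and explicit: it exhibits the achieved rate directly and needs only the standard, communication-free concentration protocol on $A$ alone, whereas the paper's route leans on the heavier maximally-correlated-state distillation result. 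The only bookkeeping point worth making explicit is that your final concentration step is a \emph{collective} incoherent operation on $A^{\otimes n}$, which is covered by item 1 applied to the $n$-copy system with its product incoherent basis, and that the fidelity requirement in the definition of $r^{\epsilon}$ is global over all $nR$ output qubits, which the standard type-class protocol does satisfy.
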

\begin{proof}
	 We start by noticing that, due to the Schmidt decomposition theorem $|\Psi\rangle=\sum_{\alpha} \lambda_\alpha |\alpha\rangle_A |\alpha\rangle_B$ for certain constants $\{\lambda_{\alpha}\}$ and local orthogonal basis $\{|\alpha\rangle_A\}$,  $\{|\alpha\rangle_B\}$.   We can bring this state to a locally incoherent maximally correlated state $|\psi'\rangle$ by adding an ancilla on $B$ in a reference state $|0\rangle_{B'}$, followed by the application of a controlled translation operation ${\bf T}_c=\sum_i|i\rangle\langle i|_{A}\otimes{\bf 1}_B\otimes \sum_k |i\oplus k\rangle\langle k|_{B'}$. Since both ${\bf T}_c$ and its inverse are free operations (being controlled unitaries with incoherent control), by the monotonicity of the upper bound under $GOIA_{\cal Z}$-operations we get that, $\Delta_{\cal Z}(|\psi'\rangle\langle \psi'|)=\Delta_{\cal Z}({\bf T}_c(\rho_{AB}\otimes |0\rangle \langle 0|){\bf T}_c^\dagger)=\Delta_{\cal Z}(\rho_{AB}\otimes |0\rangle \langle 0|)=\Delta_{\cal Z}(\rho_{AB})$. It follows from the tightness of the bound for maximally correlated states that
	 $\Delta_{\cal Z}(\rho_{AB}) = \Delta_{\cal Z}(|\psi'\rangle\langle \psi'|)=C_{\cal Z}^{REC}(|\psi'\rangle\langle \psi'|)\leq C_{\cal Z}^{REC}(|\psi\rangle\langle \psi|) \leq \Delta_{\cal Z}(\rho_{AB})$, and the equality is shown. 
\end{proof}

 {\em \bf  Proof of Eq. \ref{eq:precvscoh} --- \phantomsection{}\addcontentsline{toc}{section}{Coherence and Accurateness in DQC1}}\label{app:precdqc1} 
In this section we prove Eq. \ref{eq:precvscoh} of the main text, 
	\begin{equation*}
		prec(\frac{\tr {\bf U}}{\rm dim}) \approx -\log_2 \left({\SE}\left( \frac{\tr {\bf U}}{\dim
		}\right)\right) \approx    \frac{1}{2}\log_2(C_{\cal Z}^{REC}(\rho_{0}))\;,
		\end{equation*}
We find it instructive to first consider a slight generalisation of the DQC1 protocol, where instead of a source of maximally coherent probes, a general state $\rho_{AB}^{\otimes m}$ is provided (See Figure \ref{fig:dqc1rc}). From this state, (in the asymptotic limit) we can prepare
$n\approx C_{\cal Z}^{REC}(\rho_{AB}^{\otimes m})= m C_{\cal Z}^{REC}(\rho_{AB})$ maximally coherent probes. In DQC1, the estimation of the trace is given by 
$
\frac{{\rm Tr}{\bf U}}{dim}=\langle \sigma_x\rangle + {\bf i}\langle \sigma_y\rangle
$, being 
$\sigma_x=(^0_1\,^1_0)$, $\sigma_y=(^0_{{\bf i}}\,^{-{\bf i}}_0)$  the $x$, $y$ Pauli matrices and $\langle \ldots \rangle$ the expectation values. If the number of available probes $n$ is large, $\langle \ldots \rangle$ can be approximated by the average over the results of the outcomes of the  $n$ \emph{independent} runs of the algorithm $\langle \ldots \rangle_n$. In the asymptotic limit, the error introduced by replacing $\langle \ldots \rangle$ by $\langle \ldots \rangle_n$ is given by the Standard Error of the Mean
$\SE[\sigma_\mu]=\sqrt{\frac{\langle \sigma_\mu^2 \rangle-\langle\sigma_\mu\rangle^2}{n}}=\sqrt{\frac{1-\langle \sigma_\mu\rangle ^2}{n}}$  
with probability $\geq 68\%$\cite{devore.08}. From this it is straightforward to see that in the asymptotic limit, the precision (i.e. the number of significant (binary) digits) of a number $x$, $|x|<1$  goes like:
$$
prec(\hat{x})\approx-\log_2(\SE(\hat{x})).
$$ 
In the same way, we can bound the norm of the error in the estimation of $\frac{\tr {\bf U}}{\dim}$ by $\SE\left(\frac{\tr {\bf U}}{\dim}\right)\approx \sqrt{\SE(\sigma_x)^2+\SE(\sigma_y)^2}=\sqrt{2-(|\frac{\tr {\bf U}}{\dim}|)^2}/\sqrt{n}=
\sqrt{2-(|\frac{\tr {\bf U}}{\dim}|)^2}/{\sqrt{C_{\cal Z}^{REC}(\rho_{AB}^{\otimes m})}}$. But $1 \leq \sqrt{2-(|\frac{\tr {\bf U}}{\dim}|)^2} \leq \sqrt{2}$ and hence  
$prec\approx -\log_2(\SE({\rm Tr}{\bf U}/{dim})) \approx \frac{1}{2}\,\log_2C_{\cal Z}^{REC}(\rho_{AB}^{\otimes m})$, using this generalized algorithm.

\begin{figure}
	\centering
	\scalebox{.2}{\includegraphics{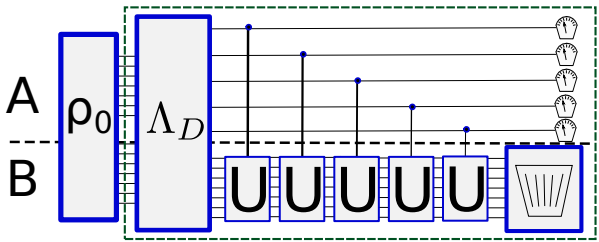}}
	\caption{{\bf Recoverable Coherence and  DQC1.} In this figure, the finite accuracy DQC1 algorithm, including the purification stage is depicted.
		Starting from a general resource state $\rho_0$, the channel $\Lambda_D$ prepares with some fidelity the state $(|+\rangle\otimes \frac{\bf 1}{{\rm Tr}{\bf 1}})^{m}$, which is used to estimate the trace of $U$.
	}
	\label{fig:dqc1rc}
\end{figure}
To see that this is indeed the maximum attainable precision for  the standard algorithm without the recovering step, let us consider  $m$ probes in the general state $\rho_{\rm probe}=\left(^p_{\alpha}\;_{1-p}^\alpha\right)$, where without loss of generality we assume $\alpha\geq 0$. For these probes, $\frac{{\rm Tr}{\bf U}}{dim }=\frac{\langle \sigma_x\rangle + {\bf i} \langle \sigma_y\rangle}{\alpha}$ while $\SE\left (\frac{{\rm Tr}{\bf U}}{\dim }\right)= \sqrt{\frac{2 - \alpha^2 \left |\frac{{\rm Tr}{\bf U}}{\dim }\right|^2}{\alpha^2 m}}$. Thus, 
$$
prec_\alpha\approx-\log_2(|\SE \frac{\tr {\bf U}}{\dim}|)\approx -\log_2 \frac{\sqrt{2-|\alpha|^2|\frac{\tr {\bf U}}{\dim}|^2}}{\sqrt{\alpha^2 m}}.
$$
But $m=C_{\cal Z}^{REC}(\rho_{\rm probe}^{\otimes m})/C_{\cal Z}^{REC}(\rho_{\rm probe})$ and hence,
$$
prec_\alpha\approx\frac{1}{2}\log_2(C_{\cal Z}^{REC}(\rho_{\rm probe}^{\otimes m}))
+\log_2\sqrt{\frac{\alpha^2/C_{\cal Z}^{REC}(\rho_{\rm probe})}{(2-|\alpha|^2|\frac{\tr {\bf U}}{\dim}|^2)}}.
$$
But the second term is a number less or equal than $0$ since $\alpha^2\leq C_{\cal Z}(\rho_{\rm probe})= C_{\cal Z}^{REC}(\rho_{\rm probe})$. As this term is finite and independent of $m$, the correction can be neglected in the asymptotic limit. 
{\phantomsection{}\addcontentsline{toc}{part}{References}}\label{references}


\begin{thebibliography}{45}
	\providecommand{\natexlab}[1]{#1}
	\providecommand{\url}[1]{\texttt{#1}}
	\expandafter\ifx\csname urlstyle\endcsname\relax
	\providecommand{\doi}[1]{doi: #1}\else
	\providecommand{\doi}{doi: \begingroup \urlstyle{rm}\Url}\fi
	
	\bibitem{BaumgratzCP14}
	T.~Baumgratz, M.~Cramer, and M.~B. Plenio, (2014).
	\newblock {\it Quantifying coherence}.
	\newblock Phys. Rev. Lett.,
	\href{http://link.aps.org/doi/10.1103/PhysRevLett.113.140401}{113:\penalty0
		140401}.
	
	\bibitem{WinterY2015}
	A.~Winter and D.~Yang, (2016).
	\newblock {\it Operational resource theory of coherence}.
	\newblock Phys. Rev. Lett.,
	\href{http://link.aps.org/doi/10.1103/PhysRevLett.116.120404}{116:\penalty0
		120404}.
	
	\bibitem{Yuan2015}
	X.~Yuan, H.~Zhou, Z.~Cao, and X.~Ma, (2015).
	\newblock {\it Intrinsic randomness as a measure of quantum coherence}.
	\newblock Phys. Rev. A,
	\href{http://link.aps.org/doi/10.1103/PhysRevA.92.022124}{92:\penalty0
		022124}.
	
	\bibitem{DuBQ2015_maxco}
	S.~{Du}, Z.~{Bai}, and X.~{Qi}, (2015).
	\newblock {\it {Coherence measures and optimal conversion for coherent
			states}}.
	\newblock Quant. Inf. Comput,
	\href{http://www.rintonpress.com/journals/qiconline.html}{15:\penalty0
		1307--1316}.
	
	\bibitem{xi2014quantum}
	Z.~Xi, Y.~Li, and H.~Fan, (2015).
	\newblock {\it Quantum coherence and correlations in quantum system}.
	\newblock Scientific Reports,
	\href{http://dx.doi.org/10.1038/srep10922}{5:\penalty0 10922}.
	
	\bibitem{KilloranSP2015}
	N.~Killoran, F.~E.~S. Steinhoff, and M.~B. Plenio, (2016).
	\newblock {\it Converting nonclassicality into entanglement}.
	\newblock Phys. Rev. Lett.,
	\href{http://link.aps.org/doi/10.1103/PhysRevLett.116.080402}{116:\penalty0
		080402}.
	
	\bibitem{Streltsov2015}
	A.~Streltsov, U.~Singh, H.~S. Dhar, M.~N. Bera, and G.~Adesso,
	(2015){\natexlab{a}}.
	\newblock {\it Measuring quantum coherence with entanglement}.
	\newblock Phys. Rev. Lett.,
	\href{http://link.aps.org/doi/10.1103/PhysRevLett.115.020403}{115\penalty0
		(020403)}.
	
	\bibitem{YaoXG+2015}
	Y.~Yao, X.~Xiao, L.~Ge, and C.~P. Sun, (2015).
	\newblock {\it Quantum coherence in multipartite systems}.
	\newblock Phys. Rev. A,
	\href{http://link.aps.org/doi/10.1103/PhysRevA.92.022112}{92:\penalty0
		022112}.
	
	\bibitem{chitambar2015assisted}
	E.~Chitambar, A.~Streltsov, S.~Rana, M.~N. Bera, G.~Adesso, and M.~Lewenstein,
	(2016).
	\newblock {\it Assisted distillation of quantum coherence}.
	\newblock Phys. Rev. Lett.,
	\href{http://link.aps.org/doi/10.1103/PhysRevLett.116.070402}{116:\penalty0
		070402}.
	
	\bibitem{chitambar2015relating}
	E.~Chitambar and M.-H. Hsieh, (2016).
	\newblock {\it Relating the resource theories of entanglement and quantum
		coherence}.
	\newblock Phys. Rev. Lett.,
	\href{http://link.aps.org/doi/10.1103/PhysRevLett.117.020402}{117:\penalty0
		020402}.
	
	\bibitem{streltsov2015hierarchies}
	A.~Streltsov, S.~Rana, M.~N. Bera, and M.~Lewenstein, (2015){\natexlab{b}}.
	\newblock {\it Hierarchies of incoherent quantum operations}.
	\newblock E-print, \href{http://arxiv.org/abs/1509.07456}{arXiv:\penalty0
		1509.07456}.
	
	\bibitem{hu2015}
	X.~Hu, A.~Milne, B.~Zhang, and H.~Fan, (2016).
	\newblock {\it Quantum coherence of steered states}.
	\newblock Scientific Reports,
	\href{http://dx.doi.org/10.1038/srep19365}{6:\penalty0 19365}.
	
	\bibitem{hu2015coherence}
	X.~Hu and H.~Fan, (2015).
	\newblock {\it Coherence extraction from measurement-induced disturbance}.
	\newblock E-print, \href{http://arxiv.org/abs/1508.01978}{arXiv:\penalty0
		1508.01978}.
	
	\bibitem{ma2015converting}
	J.~Ma, B.~Yadin, D.~Girolami, V.~Vedral, and M.~Gu, (2016).
	\newblock {\it Converting coherence to quantum correlations}.
	\newblock Phys. Rev. Lett.,
	\href{http://link.aps.org/doi/10.1103/PhysRevLett.116.160407}{116:\penalty0
		160407}.
	
	\bibitem{marvian2014}
	I.~Marvian and R.~W. Spekkens, (2014).
	\newblock {\it Modes of asymmetry: The application of harmonic analysis to
		symmetric quantum dynamics and quantum reference frames}.
	\newblock Phys. Rev. A,
	\href{http://link.aps.org/doi/10.1103/PhysRevA.90.062110}{90:\penalty0
		062110}.
	
	\bibitem{aaberg2014}
	J.~{\AA}berg, (2014).
	\newblock {\it Catalytic coherence}.
	\newblock Phys. Rev. Lett.,
	\href{http://link.aps.org/doi/10.1103/PhysRevLett.113.150402}{113:\penalty0
		150402}.
	
	\bibitem{bartlett}
	S.~D. Bartlett, T.~Rudolph, and R.~W. Spekkens, (2007).
	\newblock {\it Reference frames, superselection rules, and quantum
		information}.
	\newblock Rev. Mod. Phys.,
	\href{http://link.aps.org/doi/10.1103/RevModPhys.79.555}{79:\penalty0
		555--609}.
	
	\bibitem{GourS08}
	G.~Gour and R.~W. Spekkens, (2008).
	\newblock {\it The resource theory of quantum reference frames: manipulations
		and monotones}.
	\newblock New Journal of Physics,
	\href{http://iopscience.iop.org/1367-2630/10/3/033023}{10\penalty0
		(3):\penalty0 033023}.
	
	\bibitem{lostaglio2014}
	M.~Lostaglio, D.~Jennings, and T.~Rudolph, (2015).
	\newblock {\it Description of quantum coherence in thermodynamic processes
		requires constraints beyond free energy}.
	\newblock Nat Comms, \href{http://dx.doi.org/10.1038/ncomms7383}{6:\penalty0
		6383}.
	
	\bibitem{plenio2007introduction}
	M.~B. Plenio and S.~Virmani, (2007).
	\newblock {\it An introduction to entanglement measures}.
	\newblock Quant. Inf. Comput.,
	\href{http://www.rintonpress.com/journals/qiconline.html}{7\penalty0
		(1):\penalty0 1--51}.
	
	\bibitem{HorodeckiHHH2009}
	R.~Horodecki, P.~Horodecki, M.~Horodecki, and K.~Horodecki, (2009).
	\newblock {\it Quantum entanglement}.
	\newblock Rev. Mod. Phys.,
	\href{http://link.aps.org/doi/10.1103/RevModPhys.81.865}{81:\penalty0
		865--942}.
	
	\bibitem{OZ.01}
	H.~Ollivier and W.~H. Zurek, (2001).
	\newblock {\it Quantum discord: A measure of the quantumness of correlations}.
	\newblock Phys. Rev. Lett.,
	\href{http://link.aps.org/doi/10.1103/PhysRevLett.88.017901}{88:\penalty0
		017901}.
	
	\bibitem{vedral_discord}
	L.~Henderson and V.~Vedral, (2001).
	\newblock {\it Classical, quantum and total correlations}.
	\newblock Journal of Physics A: Mathematical and General,
	\href{http://stacks.iop.org/0305-4470/34/i=35/a=315}{34\penalty0
		(35):\penalty0 6899}.
	
	\bibitem{Modi2012}
	K.~Modi, A.~Brodutch, H.~Cable, T.~Paterek, and V.~Vedral, (2012).
	\newblock {\it The classical-quantum boundary for correlations: Discord and
		related measures}.
	\newblock Rev. Mod. Phys.,
	\href{http://link.aps.org/doi/10.1103/RevModPhys.84.1655}{84:\penalty0
		1655--1707}.
	
	\bibitem{Streltsov2015a}
	A.~Streltsov.
	\newblock \emph{Quantum Correlations Beyond Entanglement: and Their Role in
		Quantum Information Theory}, chapter Quantum Correlations Beyond
	Entanglement, pages 17--22.
	\newblock Springer International Publishing, Cham, (2015){\natexlab{a}}.
	\newblock \doi{10.1007/978-3-319-09656-8_4}.
	
	\bibitem{Knill}
	E.~Knill and R.~Laflamme, (1998).
	\newblock {\it Power of one bit of quantum information}.
	\newblock Phys. Rev. Lett.,
	\href{http://link.aps.org/doi/10.1103/PhysRevLett.81.5672}{81:\penalty0
		5672--5675}.
	
	\bibitem{ParPle.02}
	S.~Parker and M.~B. Plenio, (2002).
	\newblock {\it Entanglement simulations of shor's algorithm}.
	\newblock Journal of Modern Optics,
	\href{http://dx.doi.org/10.1080/09500340110107207}{49\penalty0 (8):\penalty0
		1325--1353}.
	
	\bibitem{Datta.05}
	A.~Datta, S.~T. Flammia, and C.~M. Caves, (2005).
	\newblock {\it Entanglement and the power of one qubit}.
	\newblock Phys. Rev. A,
	\href{http://link.aps.org/doi/10.1103/PhysRevA.72.042316}{72:\penalty0
		042316}.
	
	\bibitem{DSC.08}
	A.~Datta, A.~Shaji, and C.~M. Caves, (2008).
	\newblock {\it Quantum discord and the power of one qubit}.
	\newblock Phys. Rev. Lett.,
	\href{http://link.aps.org/doi/10.1103/PhysRevLett.100.050502}{100\penalty0
		(5):\penalty0 050502}.
	
	\bibitem{lanyon}
	B.~P. Lanyon, M.~Barbieri, M.~P. Almeida, and A.~G. White, (2008).
	\newblock {\it Experimental quantum computing without entanglement}.
	\newblock Phys. Rev. Lett.,
	\href{http://link.aps.org/doi/10.1103/PhysRevLett.101.200501}{101:\penalty0
		200501}.
	
	\bibitem{Ali.2014}
	M.~Ali, (2014).
	\newblock {\it Quantum dissonance and deterministic quantum computation with a
		single qubit}.
	\newblock International Journal of Quantum Information,
	\href{http://www.worldscientific.com/doi/abs/10.1142/S0219749914500373}{12\penalty0
		(06):\penalty0 1450037}.
	
	\bibitem{Yadin2015}
	B.~Yadin, J.~Ma, D.~Girolami, M.~Gu, and V.~Vedral, (2015).
	\newblock {\it Quantum processes which do not use coherence}.
	\newblock E-print, \href{http://arxiv.org/abs/1512.02085}{arXiv:\penalty0
		1512.02085}.
	
	\bibitem{EisertJP+2000}
	J.~Eisert, K.~Jacobs, P.~Papadopoulos, and M.~B. Plenio, (2000).
	\newblock {\it Optimal local implementation of nonlocal quantum gates}.
	\newblock Phys. Rev. A,
	\href{http://link.aps.org/doi/10.1103/PhysRevA.62.052317}{62:\penalty0
		052317}.
	
	\bibitem{vidal200}
	G.~Vidal, (2000).
	\newblock {\it Entanglement monotones}.
	\newblock Journal of Modern Optics,
	\href{http://www.tandfonline.com/doi/abs/10.1080/09500340008244048}{47\penalty0
		(2-3):\penalty0 355--376}.
	
	\bibitem{bennett_pure_entanglement_distillation}
	C.~H. Bennett, H.~J. Bernstein, S.~Popescu, and B.~Schumacher, (1996).
	\newblock {\it Concentrating partial entanglement by local operations}.
	\newblock Phys. Rev. A,
	\href{http://link.aps.org/doi/10.1103/PhysRevA.53.2046}{53:\penalty0
		2046--2052}.
	
	\bibitem{Rains.99}
	E.~M. Rains, (1999).
	\newblock {\it Bound on distillable entanglement}.
	\newblock Phys. Rev. A,
	\href{http://link.aps.org/doi/10.1103/PhysRevA.60.179}{60:\penalty0
		179--184}.
	
	\bibitem{HH.01}
	T.~Hiroshima and M.~Hayashi, (2004).
	\newblock {\it Finding a maximally correlated state: Simultaneous schmidt
		decomposition of bipartite pure states}.
	\newblock Phys. Rev. A,
	\href{http://link.aps.org/doi/10.1103/PhysRevA.70.030302}{70:\penalty0
		030302}.
	
	\bibitem{Zurek.03}
	W.~H. Zurek, (2003).
	\newblock {\it Quantum discord and maxwell's demons}.
	\newblock Phys. Rev. A,
	\href{http://link.aps.org/doi/10.1103/PhysRevA.67.012320}{67:\penalty0
		012320}.
	
	\bibitem{Horodeckies.05}
	M.~Horodecki, P.~Horodecki, R.~Horodecki, J.~Oppenheim, A.~Sen(De), U.~Sen, and
	B.~Synak-Radtke, (2005).
	\newblock {\it Local versus nonlocal information in quantum-information theory:
		Formalism and phenomena}.
	\newblock Phys. Rev. A,
	\href{http://link.aps.org/doi/10.1103/PhysRevA.71.062307}{71:\penalty0
		062307}.
	
	\bibitem{DVB.10}
	B.~Daki\ifmmode~\acute{c}\else \'{c}\fi{}, V.~Vedral, and i.~c.~v. Brukner,
	(2010).
	\newblock {\it Necessary and sufficient condition for nonzero quantum discord}.
	\newblock Phys. Rev. Lett.,
	\href{http://link.aps.org/doi/10.1103/PhysRevLett.105.190502}{105:\penalty0
		190502}.
	
	\bibitem{devore.08}
	R.~Peck, C.~Olsen, and J.~L. Devore.
	\newblock \emph{Introduction to Statistics and Data Analysis 3th Edition}.
	\newblock Cengage Learning, Boston, USA, (2008).
	
	\bibitem{modi2010unified}
	K.~Modi, T.~Paterek, W.~Son, V.~Vedral, and M.~Williamson, (2010).
	\newblock {\it Unified view of quantum and classical correlations}.
	\newblock Phys. Rev. Lett.,
	\href{http://link.aps.org/doi/10.1103/PhysRevLett.104.080501}{104:\penalty0
		080501}.
	
	\bibitem{Streltsov2015b}
	A.~Streltsov, (2015){\natexlab{b}}.
	\newblock {\it Genuine quantum coherence}.
	\newblock E-print, \href{http://arxiv.org/abs/1511.08346}{arXiv:\penalty0
		1511.08346}.
	
	\bibitem{lindblad.74}
	G.~Lindblad, (1974).
	\newblock {\it Expectations and entropy inequalities for finite quantum
		systems}.
	\newblock Communications in Mathematical Physics,
	\href{http://dx.doi.org/10.1007/BF01608390}{39\penalty0 (2):\penalty0
		111--119}.
	
	\bibitem{Ham.04}
	S.~Hamieh, R.~Kobes, and H.~Zaraket, (2004).
	\newblock {\it Positive-operator-valued measure optimization of classical
		correlations}.
	\newblock Phys. Rev. A,
	\href{http://link.aps.org/doi/10.1103/PhysRevA.70.052325}{70:\penalty0
		052325}.
	
\end{thebibliography}
\end{document}